\def\BibTeX{{\rm B\kern-.05em{\sc i\kern-.025em b}\kern-.08em
    T\kern-.1667em\lower.7ex\hbox{E}\kern-.125emX}}
\newtheorem{theorem}{Theorem}[section]
\newtheorem{lemma}[theorem]{Lemma}
\newtheorem{definition}[theorem]{Definition}
\numberwithin{equation}{section}
\DeclareMathOperator{\Pois}{Poisson}
\begin{document}

\title{Committee Selection is More Similar Than You Think: \\ Evidence from Avalanche and Stellar}

\author{\IEEEauthorblockN{Tarun Chitra}
\IEEEauthorblockA{\textit{Gauntlet Networks, Inc.} \\
tarun@gauntlet.network}
\and
\IEEEauthorblockN{Uthsav Chitra}
\IEEEauthorblockA{\textit{Department of Computer Science} \\
\textit{Princeton University}\\
uchitra@cs.princeton.edu}
}

\maketitle

\begin{abstract}
Increased interest in scalable and high-throughput blockchains has led to an explosion in the number of committee selection methods in the literature. Committee selection mechanisms allow consensus protocols to safely select a \emph{committee}, or a small subset of validators that is permitted to vote and verify a block of transactions, in a distributed ledger. There are many such mechanisms, each with substantially different methodologies and guarantees on communication complexity, resource usage, and fairness. In this paper, we illustrate that, despite these implementation-level differences, there are strong statistical similarities between committee selection mechanisms. We concretely show this by proving that the committee selection of the Avalanche consensus protocol can be used to choose committees in the Stellar Consensus Protocol that satisfy the necessary and sufficient conditions for Byzantine agreement. We also verify these claims using simulations and numerically observe sharp phase transitions as a function of protocol parameters. Our results suggest the existence of a `statistical taxonomy' of committee selection mechanisms in distributed consensus algorithms.
\end{abstract}

\begin{IEEEkeywords}
Cryptocurrencies, Distributed Consensus Protocol, Committee Selection, Stellar Consensus Protocol, Avalanche
\end{IEEEkeywords}

\section{Introduction}
Byzantine Fault Tolerant (BFT) algorithms provide a robust, resilient way for a group of $n$ networked participants to come to agreement on an ordered series of transactions, known as a ledger. Although existing algorithms, such as PBFT \cite{castro1999practical} or Paxos \cite{lamport1998part}, provide strong safety guarantees for transactions, they do this at the expense of performance (e.g. transaction rate), permissioning (the identities of all nodes must be known at all times) and synchrony assumptions. In recent years, there has been a burst of interest in distributed consensus algorithms that provide weaker safety guarantees but can achieve higher transaction rates and reduced bandwidth usage. These algorithms, inspired by the empirical success of the Nakamoto consensus mechanism used in Bitcoin, combine advances in cryptography and distributed systems to achieve high performance and low bandwidth usage while providing tail-bounds on the probability of a successful attack (see \cite{garay2015bitcoin, pass2017analysis, fitzi2018}). Such protocols (including Bitcoin) often reduce the BFT requirement, which purports that an algorithm is resistant to arbitrary Byzantine attackers, to specific assumptions about the ways that bad actors can misbehave. Given the low transaction rates of live networks, such as Bitcoin, Ethereum, and Tezos, there has been an uptick in research that aims to improve the scalability of blockchain protocols so that they can be used as a general compute platform \cite{wood2014ethereum, hanke2018dfinity, sompolinsky2015secure}.

One of the key improvements common to most blockchain scaling solutions is the \emph{committee selection mechanism}. Committee selection mechanisms allow users to coordinate and choose a small committee of users without knowing the identities of all other participants. The committee is then allowed to append transactions to the ledger while collecting transaction fees and/or block mining rewards. There are three major camps of committee selection, each with substantially different methodologies and guarantees: the first camp uses selection to choose a small set of validators who achieve consensus via a standard Byzantine Fault Tolerance (BFT) algorithm; the second uses selection to assign validators / miners to specific chains; and the last allows users explicitly decide the validators that they trust. 

Distinct committee selection mechanisms can provide dramatically different guarantees in terms of ease of implementation, fairness, communication complexity, resource usage, and number of rounds of multi-party computation. On one side, social committee selection mechanisms such as Ripple \cite{schwartz2014ripple}, where individual nodes choose which nodes they want to reach consensus with, provide minimal fairness\footnote{In this context, fairness refers to a node receiving $p\pm \epsilon \in (0,1)$ percent of the rewards (fees and block rewards) if the node commits a fraction $p \in (0,1)$ of the resources committed to verification \cite{pass2017fruitchains}}, communication, or multi-party computation (MPC) bounds, but are much easier to implement and let users directly control their resource usage. Conversely, protocols such as Algorand \cite{micali2016algorand, gilad2017algorand} --- the first camp of protocols that use a small committee of constant or logarithmic size to perform traditional BFT --- rely on public randomness to perform a cryptographic sortition of users. These protocols require much more coordinated communication, but provide stronger fairness guarantees. The final camp of committee selection methods are from protocols that have multiple chains or shards that each require their own set of validators \cite{hanke2018dfinity, kokoris2017omniledger}. In this setting, committee selection corresponds to assigning each of the $n$ participants to $k$ chains in a way that is as close to random as possible. The goal of using randomness in committee selection is to ensure that the protocol is censorship resistant by making it probabilistically impossible for a validator to use an attack, such as a grinding attack, to choose the shard that they get assigned to. In many of these protocols, committee selection is performed repeatedly every $T$ epochs in order to ensure a stronger form of censorship resistance that is insensitive to adaptive adversaries \cite{zamanirapidchain}. 

%Performing committee selection in the permissionless setting is difficult and relies on techniques such as Verifiable Random Functions \cite{micali1999verifiable} and Verifiable Delay Function \cite{boneh2018verifiable}. From a practical standpoint, the design of a committee selection mechanism is often dwarfed in complexity only by the design of a protocol's Sybil resistance mechanism. 

From a more pragmatic point of view, protocol designers are faced with a variety of trade-offs when choosing a committee selection mechanism. Those concerned with fairness will end up increasing their compute and communication costs and often have to change their Sybil resistance mechanism in response. For example, Algorand requires a common coin (a public randomness beacon for \emph{all} participants, not just the committee) in order to guarantee both fairness and liveness (under asynchrony). This coin, while only used in scenarios under which a rare timeout condition is hit, has a non-trivial cost in terms of ease of implementation and communication complexity. Optimizing these trade-offs ends up tightly coupling the Sybil resistance mechanism to the committee selection mechanism. Intuitively, this suggests that committee selection mechanisms are incomparable, in that their guarantees become tightly coupled to the protocol(s) they are used with.

In this paper, we show that it is possible for two seemingly disparate committee selections mechanisms to be statistically similar. Specifically, we show that Avalanche \cite{avalanche2018}, a protocol that uses multiple rounds of MPC, can be used to generate committees for a social consensus protocol, the Stellar Consensus Protocol \cite{mazieres2015stellar}, that provides BFT guarantees under certain conditions (see \S2.2). Our methods for showing that these two protocols are similar are probabilistic in nature. Avalanche utilizes \emph{private randomness} to allow a participant to randomly sample other participants votes. In order to replicate this mechanism, we outline a generative model for the Avalanche sampling procedure, allowing us to transform Avalanche samples into Stellar's notion of a `socially trusted consensus group', or \emph{quorum slice}. Then, through both numerical estimates and theoretical arguments, we show the existence of a phase transition for when quorum slices from Avalanche satisfy the liveness and safety guarantees of Stellar. Our results suggest that there are more commonalities between committee selection mechanisms than previously thought, and illustrate the importance of numerical simulation for designing committee selection mechanisms. 
% add this to conclusion
%We conclude by trying to use this result to begin creating a statistical taxonomy, similar to universality classes in statistical physics and probability, of committee selection mechanisms.

\section{Stellar and Avalanche Protocols}

\subsection{Definitions}

We first describe the member preference model used by many Federated Byzantine Agreement (FBA) algorithms.

\begin{definition}
\label{fba_system}
A \emph{Federated Byzantine Agreement System (FBAS)} is a pair $(V, Q)$, where $V$ is a set of participants, and $Q$ is a \emph{quorum slice function} $Q : V \rightarrow 2^{2^V}$, with $v \in S$ for all $S \in Q(v)$. 
\end{definition}

The quorum slice function $Q$ maps each vertex $v$ to a set of one or more \emph{quorum slices}. Intuitively, a quorum slice for node $v$ is a sufficient set of nodes for $v$ to reach agreement. For example, in a traditional Byzantine agreement system \cite{lamport_byzantine}, every node has the quorum slices $Q(v) = \{ \{v\} \cup S : S\subset V, |S| \geq \frac{2}{3}|V| \}$ and as such, has same set of quorum slices as every other node, i.e. $Q(v) = Q(w)$ for all vertices $v, w$.

\begin{definition}
\label{quorum}
A \emph{quorum} is a set of nodes $U \subset V$ such that, for all $v \in U$, there exists some $S \in Q(v)$ with $S \subset U$.
\end{definition}

In other words, a quorum is a set of nodes which can reach agreement without input from any other node in the system. In a traditional Byzantine agreement system with $3n+1$ nodes, any $2n+1$ of the nodes form a quorum.

% actually, do we need this hypergraph language?
% TODO: maybe get rid of it?
We can rephrase Definitions \ref{fba_system} and \ref{quorum} in the language of graph theory by appealing to directed hypergraphs \cite{directed_hg}.

\begin{definition}
A FBAS is a directed hypergraph $H=(V, E)$, whose hyperedges are of the form $(v, T)$ for $T \subset V$ and $v \in T$. A \emph{quorum} is a set $U \subset V$ such that, for all $v \in U$, there exists some edge $e = (v, T)$ incident to $v$, with $T \subset U$.
\end{definition}

Two important properties of an FBAS are safety and liveness, which we define below.

\begin{definition}[Safety]
A set of nodes in an FBAS exhibits \emph{safety} if no two nodes externalize different values at the same time.
\end{definition}

\begin{definition}[Liveness]
A node in an FBAS exhibits \emph{liveness} if it can externalize a value without the participation of any adversarial nodes.
\end{definition}

\subsection{Stellar Consensus Protocol}

The Stellar Consensus Protocol (SCP) \cite{mazieres2015stellar} works to preserve BFT guarantees by only requiring consensus among a series of subset of users. In an $n$ user BFT algorithm, one necessarily has $\Omega(n^2)$ messages sent across the network unless routing assumptions can be made. SCP relaxes this by giving each network participant the ability to choose their own quorum slices --- groups of other network participants that they place their trust in. System-wide consensus is then achieved via a modified version of the Paxos algorithm, where each network participant only has to achieve consensus within one of their quorum slices. 

Safety and liveness of the SCP are provably guaranteed when the FBAS hypergraph satisfies certain edge intersection constraints. At the heart of these constraints is the \emph{quorum intersection property} (QIP).

\begin{definition}
\label{qip}
An FBAS has the \emph{quorum intersection property} (QIP) if there do not exist disjoint quorums of the FBAS. That is, $U_1 \cap U_2 \neq \varnothing$ for every pair of quorums $U_1, U_2$.
\end{definition}

To write out the guarantees for SCP, we need the following definition.

\begin{figure*}[h]
    \centering
    \includegraphics[scale=0.4]{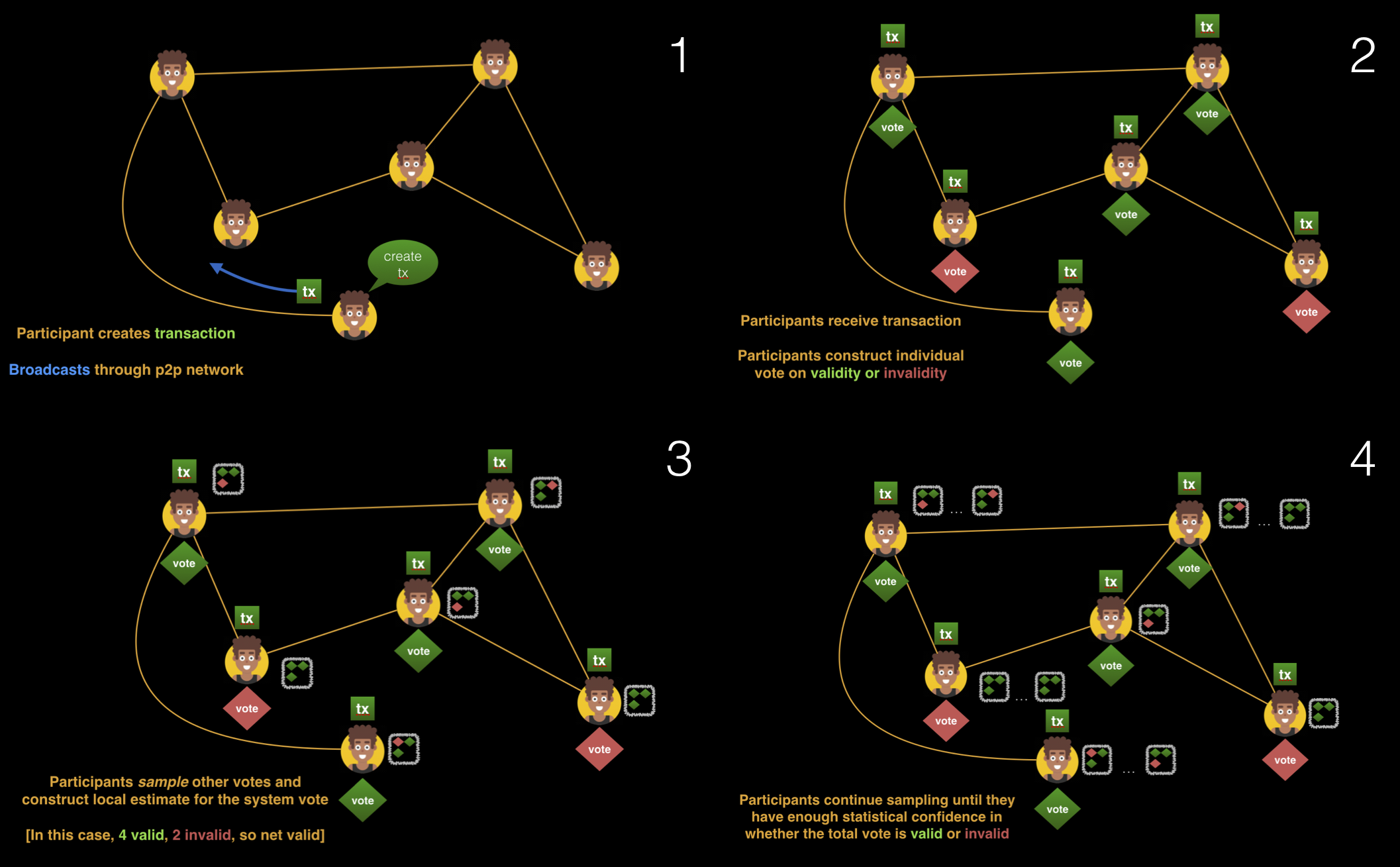}
    \caption{The Avalanche consensus algorithm.(1) A user broadcasts a transaction via a peer-to-peer network to validators. (2) Each validator, upon receipt of this transaction, votes on whether to accept or reject i. (3) Each validator samples votes from other participants in the network using private randomness. (4) Validators repeat this sampling procedure (e.g. sampling with replacement) until they reach an acceptable level of statistical confidence}
    \label{fig:avalanche}
\end{figure*}

\begin{definition}
If $H=(V, E)$ is a FBAS, then for a set of vertices $B \subset V$, we define the  \emph{deletion of $B$ in $H$} as the subgraph $H^B = (V \setminus B, E^B)$, where 
\begin{equation}
    E^B = \{(v, T \setminus B) \mid v \in V\setminus B \text{ and } (v, T) \in E  \}.
\end{equation}
\end{definition}

Formally, the SCP has the following guarantees for safety \cite{mazieres2015stellar}.

\begin{theorem}
\label{stellar_qip}
Let $H$ be an FBAS. Then, the SCP protocol on $H$ exhibits \emph{safety} when $H^S$ exhibits the QIP for any set $S$ of adversarial nodes. 
\end{theorem}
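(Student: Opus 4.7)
My plan is to prove the contrapositive by a quorum-intersection argument, which is the standard pattern for safety proofs in Paxos-style protocols adapted to the FBAS setting. Let $S \subset V$ be the set of adversarial nodes, and suppose for contradiction that two honest nodes $v_1, v_2 \in V \setminus S$ externalize different values $x_1 \neq x_2$ at the same time. By the ratification/voting rules of SCP (a node only externalizes a value when some quorum containing it has voted for that value), there must exist quorums $U_1, U_2$ of $H$ with $v_i \in U_i$ such that every node in $U_i$ cast a vote consistent with $x_i$.

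The crucial structural step is to transfer these quorums from $H$ into the deleted FBAS $H^S$. I would prove the following auxiliary claim: if $U$ is a quorum in $H$, then $U \setminus S$ is a quorum in $H^S$ (provided it is nonempty). This follows directly from unfolding the definitions: for any $v \in U \setminus S$ there exists $(v, T) \in E$ with $T \subset U$; by the definition of $E^S$ the hyperedge $(v, T \setminus S)$ lies in $E^S$, and $T \setminus S \subset U \setminus S$. Applied to $U_1$ and $U_2$, this yields two quorums $U_1 \setminus S$ and $U_2 \setminus S$ of $H^S$, both nonempty because they contain $v_1$ and $v_2$ respectively.

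Now I invoke the hypothesis that $H^S$ satisfies the QIP (Definition \ref{qip}): the two quorums $U_1 \setminus S$ and $U_2 \setminus S$ must intersect, so there is some honest node $w \in (U_1 \cap U_2) \setminus S$. But then $w$ voted consistently with both $x_1$ and $x_2$, contradicting the assumption that $w$ is honest and that $x_1 \neq x_2$ (honest nodes do not vote for conflicting values in the same slot of SCP). This contradiction establishes safety.

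The step I expect to be the main obstacle is not the combinatorial quorum-intersection argument, which is essentially a one-liner once the deletion is set up, but rather the protocol-level invariant that honest nodes in any quorum ratifying a value all voted for that value. Making this precise requires appealing to the SCP ballot/confirmation state machine and arguing that externalization implies quorum-wide agreement among honest members; I would cite the relevant voting rules from \cite{mazieres2015stellar} rather than re-deriving them, and otherwise keep the argument at the level of the FBAS abstraction where the deletion lemma does all the work.
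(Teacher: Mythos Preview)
The paper does not actually prove this theorem: it is stated as a known result and attributed to \cite{mazieres2015stellar} (``Formally, the SCP has the following guarantees for safety \cite{mazieres2015stellar}''), with no proof given in the body. So there is no ``paper's own proof'' to compare against.

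That said, your argument is the standard one from the Stellar paper itself. The deletion lemma you isolate --- if $U$ is a quorum in $H$ and $U \setminus S \neq \varnothing$, then $U \setminus S$ is a quorum in $H^S$ --- is exactly Theorem~2 of \cite{mazieres2015stellar}, and the contradiction via an honest node in the intersection is how safety is derived there (their Theorems~5--6). Your self-identified obstacle is also the right one: the purely combinatorial part is trivial once the deletion lemma is in hand, and all the real work lives in the protocol invariant that externalization of $x$ at an honest node implies a full quorum of nodes voted (or confirmed) $x$, which in SCP requires tracing through the nominate/prepare/commit ballot machinery. Citing \cite{mazieres2015stellar} for that invariant, as you propose, is appropriate at this level of abstraction and is precisely what the present paper does.
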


Liveness in SCP is also formally guaranteed through two other technical conditions involving quorum intersection and the topology of $H$.

\subsection{Avalanche}

The Avalanche consensus algorithm \cite{avalanche2018} achieves consensus via repeated probabilistic sampling. Figure \ref{fig:avalanche} depicts how validators interact within Avalanche and solicit repeated votes from other participants. Avalanche provides significantly weaker safety guarantees than the SCP, but is able to achieve a guaranteed communication complexity of $O(k n \log n)$, where $k$ is a parameter that controls the statistical accuracy of a user's local estimate for whether consensus was reached. The algorithm does this by having users use private randomness to sample the set of nodes in the network and decide whether the network has reached consensus or not after taking enough samples. The authors of \cite{avalanche2018} prove, under an assumption of strong network synchrony, a relationship between the number of samples each user takes and the maximum tolerance to Byzantine actors in the system. In this framework, an individual can tune the number of samples they need to reach consensus just as a user can tune their quorum slices in the SCP.

\begin{figure*}
\centering
\includegraphics[scale=0.6]{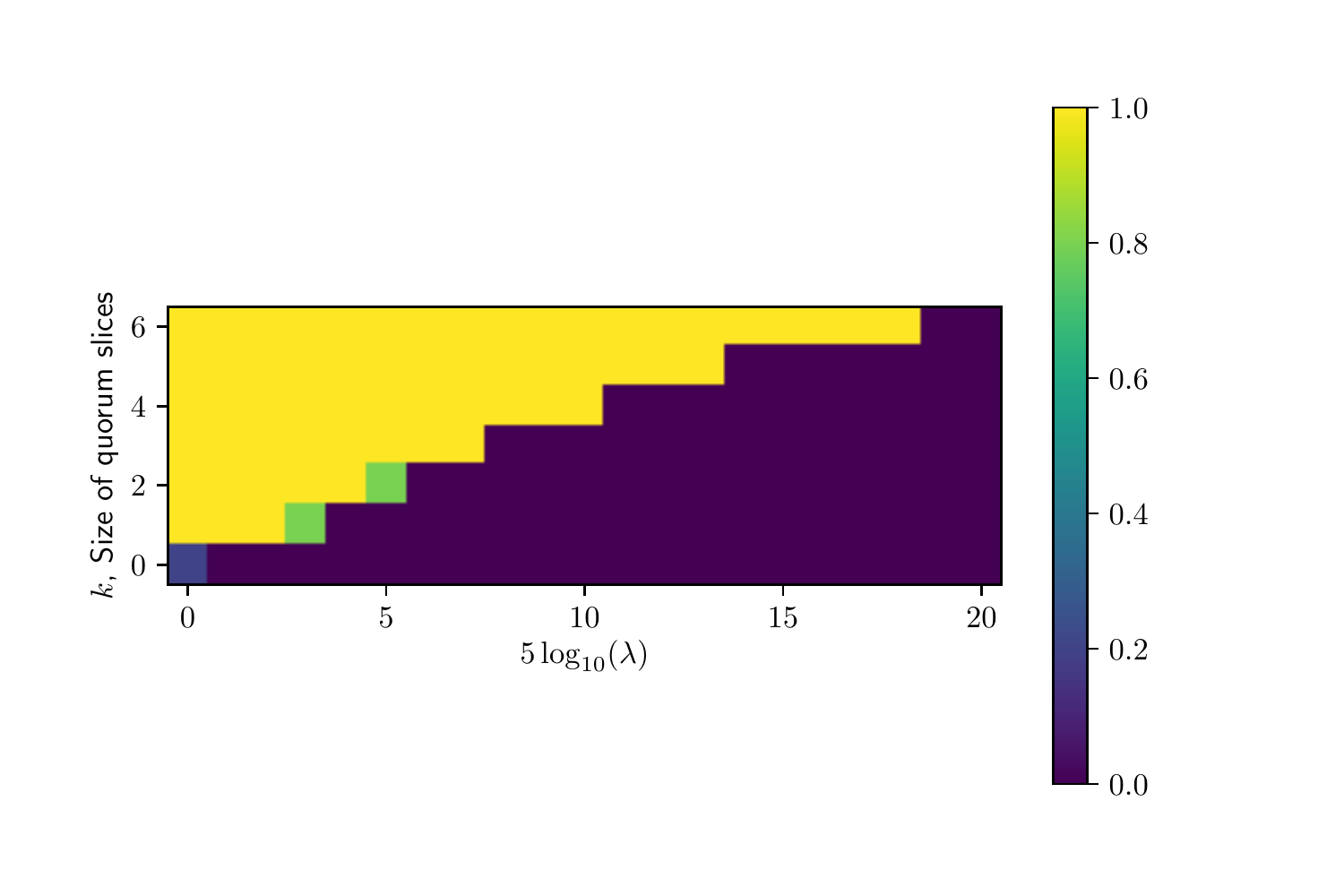}
\caption{Plot depicting when QIP holds for the Avalanche generative model with a network size of $n=16$}
\label{simulation}
\end{figure*} 

The Avalanche algorithm is built up via three previous algorithms: Slush, Snowflake, and Snowball. Slush contains the underlying ideas of the consensus algorithm and is a simplified version of the heavy-hitters algorithm \cite{cormode2005improved, berinde2010space} that is adapted for boolean functions. In Slush and heavy-hitters, the goal is to take samples of size $k$ from a list $L$ of size $N$ and identify the elements $\ell \in L$ such that for some $\phi \geq 0, \; |\{i : L_i =\ell\}| \geq \phi N$. For Slush, we choose $\phi = \frac{1}{2}$. This is done via the following steps:
\begin{enumerate}
\item Sample $k$ elements $\ell_1, \ldots, \ell_k$ from $L$.
\item Compute which indices are most likely to be the majority element of $L$, i.e. $|\{i : L_i =\ell\}| \geq \frac{N}{2}$, based on their frequency in $[\ell_1,\ldots, \ell_k]$.
\item Compute a confidence interval for how likely the most common element seen in $[\ell_1,\ldots, \ell_k]$ is the majority element of $L$.
\item Repeat the above steps (termed a \emph{round}) until the confidence interval shrinks to below a certain threshold, chosen to maximize the probability that the chosen element satisfies the majority condition.
\end{enumerate}
In the case of Slush, the list $L$ consists of the boolean values of whether the $i$th node believes that a certain transaction is valid or not (i.e. the entries of $L$ are either $0$ or $1$). A node samples from this list by using the peer-to-peer network to request votes from different participants. This way, the user doesn't need to know the whole list, but instead treats the samples as results from an online sampling method. Snowflake and Snowball improve upon Slush by reducing the sampling complexity (e.g. the expected number of rounds needed) via an improved confidence calculation, and Avalanche uses Snowball to vote on forks of a directed acyclic graph that represents the ledger. Avalanche provides much weaker guarantees than Stellar by replacing safety (Definition 2.4) with:
\begin{quote}
    P1. Safety. No two correct nodes will accept conflicting transactions
\end{quote}
Note that this new condition is weaker than Definition 2.4 as it provides no bound on how long it will take for two nodes to accept conflicting transactions. This weakened safety requirement, however, is precisely what allows for probabilistic sampling. Avalanche ensures fairness locally in that a participant uses their own randomness source to sample votes from the rest of the network.

\section{Using Avalanche to Generate Quora}

In this section, we show that Avalanche can generate quora that satisfy the Quorum Intersection Property of the Stellar protocol. Thus, Avalanche achieves liveness and safety guarantees similar to SCP (Theorem BLAH).

Our goal in this section is to show that, under certain conditions, Avalanche can generate quora that satisfy the QIP (Definition \ref{qip}), and thus achieve liveness and safety guarantees similar to Stellar (Theorem \ref{stellar_qip}). Using numerical simulations, we empirically observe a phase transition which separates whether or not samples from the Avalanche generative model exhibit QIP with high probability (Figure \ref{simulation}). We also formally prove upper and lower bounds for the phase transition in Theorems \ref{thm_ub} and \ref{thm_lb} respectively, which we illustrate in Figure \ref{phase_transition_diagram}. Our exposition illustrates the importance of numerical simulation in gaining intuition about phase transitions and their value in designing committee selection mechanisms.

\subsection{Generative Model for Avalanche}

We use the following generative model to describe the randomized quorum selection in Avalanche. Label the vertices in our network as $\{1, \dots, n\}$. For each vertex $v \in \{1, \dots, n\}$:
\begin{itemize}
    \item Let $Y_v \sim \Pois(\lambda)$.
    \item Let $A^v_1, \dots, A^v_{Y_v}$ be independently drawn $k-1$ element subsets of $\{1, \dots, n\} \setminus v$.
    \item The \emph{quorum slices} of $v$ are 
    \begin{equation}
        Q_{ava}(v) = \{X^v_1, \dots, X^v_{Y_v}\},
    \end{equation}
    where $X^v_i = A^v_i \cup \{v\}$.
\end{itemize}

Each vertex draws quorum slices from a hypergeometric distribution, matching the analysis done in \cite{avalanche2018}. We also use the standard Poissonization technique \cite{valiant_poisson, jacquet_poisson} to model the number of network samples each node takes. 

\subsection{Simulations}

We create a network of size $n=16$. For $k=2, \dots, 8$ and $\lambda = 10^0, \dots, 10^4$, we draw $10$ samples from our generative model, and test whether those samples satisfy the QIP. Since the general problem of testing whether a network satisfies the QIP is NP-complete \cite{scp_np}, we recursively go through all disjoint pairs of subsets of vertices, checking whether those subsets are quorums.\footnote{We use code from \url{https://github.com/fixxxedpoint/quorum\_intersection} to perform this check efficiently. For $n=16$ it took $40$ hours to run on a MacBook Pro with 8GB RAM.} We exclude $k > n/2$ because any two subsets of size $k$ always intersect. The results of our simulation are shown in Figure \ref{simulation}.

Interestingly, the phase transition in Figure \ref{simulation} formalizes some of the intuition behind how nodes should choose quorum slices. If nodes have ``too many" quorum slices (i.e. exponentially many quorum slices), then disjoint groups of nodes can come to consensus by themselves, and so system-wide consensus cannot be reached. This statistical reasoning improves upon the quorum selection advice of the Stellar paper \cite{mazieres2015stellar}, which states that nodes should ``pick conservative slices that lead to large quorums". 

\subsection{Theory}

\begin{figure*}[h]
\centering
\includegraphics[scale=0.35]{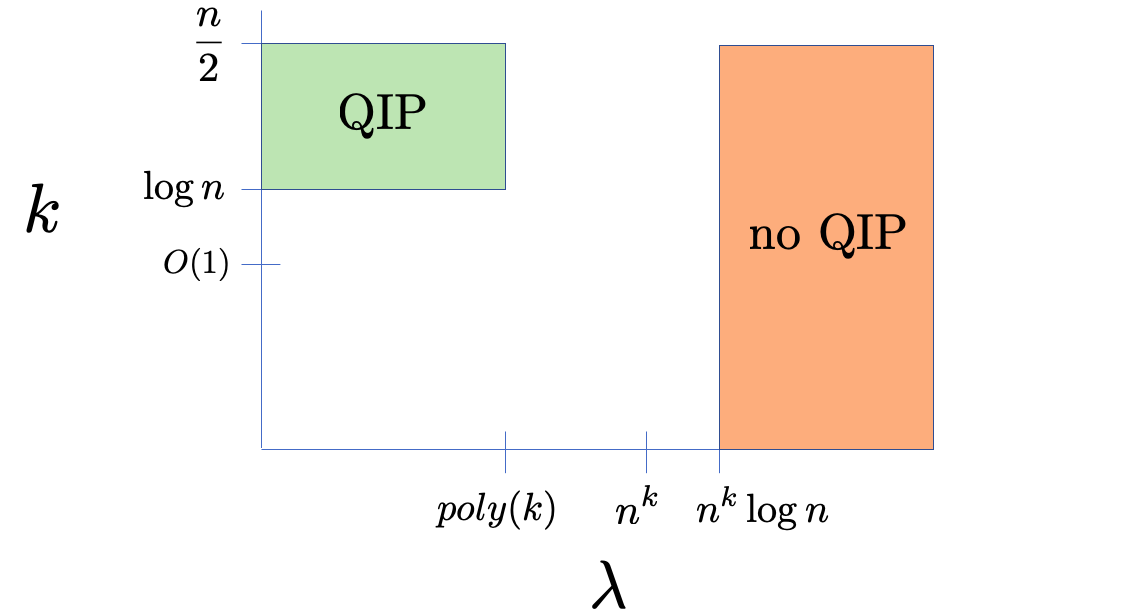}
\caption{Proven lower and upper bounds for when the Avalanche generative model exhibits QIP. Note that QIP is always achieved for $k > n/2$.}
\label{phase_transition_diagram}
\end{figure*} 

From the simulation results in Figure \ref{simulation}, we see a phase transition, where QIP holds for $\lambda < f(k)$, and does not hold for $\lambda > f(k)$. Moreover, the results suggests that $f(k)$ is exponential in $k$. While the exact form of $f$ is unclear, we can prove upper and lower bounds for $f(k)$ (we suspect $f(k) = O(n^k)$). We illustrate our bounds graphically in Figure \ref{phase_transition_diagram}.

% prove upper and lower bounds for $f(k)$. We show that $O(n^k \log{n})$ is an upper bound for $f(k)$ for any $k < n/2$, and that if $k=O(\log{n})$, then $f(k) \geq n^c$ for any polynomial $n^c$.

% Our simulations (Figure X) suggest that there exists some phase transition where QIP holds for $\lambda < f(k)$, and QIP does not hold for $\lambda > f(k)$; moreover, they suggest that $f(k)$ is exponential in $k$. While the exact form of $f$ is unclear, we prove that $f(k) = O(n^k \log{n})$, i.e. $n^k \log{n}$ is an upper bound for $f(k)$.

\subsubsection{Upper Bound for Phase Transition}

\begin{theorem}
\label{thm_ub}
Assume $k < n/2$. If $\lambda = O(n^k \log{n})$, then $Q_{ava}$ does not exhibit the quorum intersection property with high probability.
\end{theorem}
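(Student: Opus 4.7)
The plan is to exhibit, with high probability, a partition of $V = \{1,\ldots,n\}$ into two disjoint sets $V_1$ and $V_2$ that are both quorums, which directly contradicts QIP. The cleanest choice is the balanced partition with $|V_i| = \lfloor n/2 \rfloor$ or $\lceil n/2 \rceil$; since $k < n/2$, each part has size at least $k$, so it can in principle contain a quorum slice around each of its members.

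First I would compute, for a fixed vertex $v \in V_i$, the probability that one of its slices $X^v_j = \{v\} \cup A^v_j$ is contained in $V_i$. Because $A^v_j$ is uniform on $(k-1)$-subsets of $V \setminus \{v\}$, this probability is
\begin{equation}
p_i \;=\; \binom{|V_i|-1}{k-1}\Big/\binom{n-1}{k-1}.
\end{equation}
Since $Y_v \sim \Pois(\lambda)$ and the $A^v_j$ are i.i.d., Poisson thinning gives that the number of $v$'s slices lying inside $V_i$ is $\Pois(\lambda p_i)$. Hence $v$ has no slice inside $V_i$ with probability $\exp(-\lambda p_i)$, and a union bound over all $n$ vertices yields
\begin{equation}
\Pr\bigl[\, V_1, V_2 \text{ are not both quorums} \,\bigr] \;\leq\; n \exp\bigl(-\lambda \min(p_1, p_2)\bigr).
\end{equation}

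The main remaining step is to lower-bound $\min(p_1,p_2)$. For the balanced partition a direct binomial estimate gives $p_i = \Theta(2^{-(k-1)})$ (essentially $(n/2)^{k-1}/n^{k-1}$), so $\lambda \geq C \cdot 2^{k}\log n$ already drives the union bound to $o(1)$; the stated magnitude $\lambda$ of order $n^k \log n$ therefore holds with considerable slack. One could alternatively take $|V_1| = k$ and $|V_2| = n-k$, giving $p_1 = 1/\binom{n-1}{k-1} = \Theta(n^{-(k-1)})$ and the slightly weaker threshold $\lambda \gg n^{k-1}\log n$, which is still well within the stated range.

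I do not anticipate any serious obstacle: once the partition is fixed and the binomial ratio is estimated, the argument reduces to a short Chernoff-style union bound together with Poisson thinning. The one subtlety worth flagging is monotonicity --- QIP-failure is monotone nondecreasing in $\lambda$ (a sample at $\lambda' > \lambda$ can be coupled to contain the sample at $\lambda$, so any pair of disjoint quorums at $\lambda$ remains a pair at $\lambda'$). Consequently the theorem's claim is naturally read as ``it suffices that $\lambda$ reach order $n^k\log n$ for QIP to fail,'' and the plan above establishes exactly that, with the chosen partition serving as the explicit witness.
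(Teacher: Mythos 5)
Your proof is correct, but it takes a genuinely different route from the paper's. The paper also witnesses QIP-failure by two disjoint quorums, but it chooses the two small sets $\{1,\dots,k\}$ and $\{k+1,\dots,2k\}$, so that each vertex must draw the one exact slice equal to its own $k$-set; the analysis then runs through geometric waiting times, a comparison with exponential random variables, and Bennett's inequality for the Poisson counts $Y_v$. You instead split $V$ into two large parts and use Poisson thinning to get the exact probability $\exp(-\lambda p_i)$ that a vertex of $V_i$ has no slice inside $V_i$, followed by a plain union bound. In effect you re-derive the paper's Lemma \ref{lem_lb} --- your thinning identity is exactly $P(\forall S\in Q_{ava}(u),\,S\not\subset U)=\exp\bigl[-\lambda\binom{|U|-1}{k-1}/\binom{n-1}{k-1}\bigr]$, which the paper deploys only for the lower bound (Theorem \ref{thm_lb}) --- and apply it to the upper bound as well, which unifies the two halves of the paper's analysis and avoids the geometric/exponential/Bennett machinery entirely. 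Your route is also quantitatively stronger: the $k$-versus-$(n-k)$ split shows QIP already fails once $\lambda$ is of order $n^{k-1}\log n$, and the balanced split shows failure once $\lambda$ is of order $C^{k}\log n$ when $k=o(n)$, so the theorem's $n^{k}\log n$ has considerable slack (consistent with the simulations, which suggest an exponential-in-$k$ threshold). Two small caveats. First, your estimate $p_i=\Theta(2^{-(k-1)})$ for the balanced split is only valid when $k=o(n)$; for $k$ proportional to $n$ (e.g.\ $k$ near $n/2$) the ratio $\binom{n/2-1}{k-1}/\binom{n-1}{k-1}$ is exponentially small in $n$, not in $k$. This is harmless, since under $k<n/2$ one always has $p_i\ge\binom{n-1}{k-1}^{-1}\ge n^{-(k-1)}$, and that crude bound already makes $n\exp(-\lambda p_i)=o(1)$ at $\lambda=2n^{k}\log n$. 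Second, your monotonicity remark is the right reading of the awkwardly stated hypothesis ``$\lambda=O(n^{k}\log n)$'' (taken literally it would include $\lambda=0$, where QIP holds vacuously); the paper's own proof likewise just fixes $\lambda=2n^{k}\log n$, and your coupling argument is what makes the ``for all larger $\lambda$'' extension rigorous.
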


\begin{proof}
% write somewhere lambda = 2n^k log(n)
Let $\lambda = 2n^k \log{n}$. We will show that QIP does not hold by showing that $\{1, \dots, k\}$ and $\{k+1, \dots, 2k\}$ are both quora with high probability.

For $v=1, \dots, k$, let $T_v$ be independent geometric random variables with probability parameter $p=\binom{n-1}{k-1}^{-1}$. $T_v$ represents the number of quorum slices vertex $v$ must draw before $\{1, \dots, k\}$ is one of its quorum slices. Thus, $\max(T_1, \dots, T_k)$ is the number of quorum slices vertices $v=1, \dots, k$ must draw for $\{1, \dots, k\}$ to be a quora. 

Let $U_1, \dots, U_k$ be independent exponential random variables, with rate $\alpha = -(\log{1-p})^{-1}$. Since $\lfloor U_i \rfloor$ is geometric with parameter $p$, we have $P(U_i > t) \geq P(T_i > t)$ for any $t > 0$, and so $P(\max(U_1, \dots, U_k) > t) \geq P(\max(T_1, \dots, T_k) > t)$. Setting $t = \frac{\log{n}}{\gamma}$ yields
\begin{equation*}
    \begin{split}
        P(\max(T_1, \dots, T_k) > t) &\leq P(\max(U_1, \dots, U_k) > t) \\
        &= 1 - P(\max(U_1, \dots, U_k) < t) \\
        &= 1-P(U_1 < t)^k \\
        &= 1 - (1-e^{-\log{n}})^k \\
        &= 1-(1-n^{-1})^k \\
        &\leq \frac{k}{n},
    \end{split}
\end{equation*}
where we use Bernoulli's inequality in the last line. Thus, if each vertex in $\{1, \dots, k\}$ draws more than $t= \frac{\log{n}}{\alpha}$ quorum slices, then with high probability, $\{1, \dots, k\}$ is a quorum. Note that
\begin{equation*}
    \frac{\log{n}}{\alpha} = -\frac{\log{n}}{\log{(1-p)}} \leq \frac{\log{n}}{p} \leq n^k \log{n}.
\end{equation*}
Recall that Bennett's inequality \cite{boucheron2013concentration}, when applied to $X \sim \Pois(\lambda)$, yields $P(X \geq \lambda - x) \leq e^{-\frac{x^2}{2\lambda} h(-\frac{x}{\lambda})}$, where $h(x)$ is decreasing in $x$ and bounded above by 2 and $h(-1) = 2$. Thus for $\lambda = 2n^k \log n$ and $x = n^k\log n$, with high probability, $P(Y_v < n^k \log{n}) \leq e^{-2n^k \log(n)}$ for $v = \{1, \dots, k\}$. Thus, with high probability, each vertex will draw at least $\frac{\log{n}}{\alpha}$ quorum slices, and so $\{1, \dots, k\}$ will be a quorum. By similar reasoning, $\{k+1, \dots, 2k\}$ is also a quorum, so QIP does not hold with high probability.
\end{proof}

\subsubsection{Lower Bound for Phase Transition}

\begin{theorem}
\label{thm_lb}
Assume $k = \omega(\log{n})$. If $\lambda = O(n^c)$ for some constant $c$, then $Q_{ava}$ exhibits the quorum intersection property with high probability.
\end{theorem}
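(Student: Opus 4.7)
The plan is to show that, with high probability, every quorum generated by $Q_{ava}$ has size strictly greater than $n/2$; once this is established, any two quora must share a vertex and hence QIP follows immediately. The argument is a first-moment calculation over small candidate quora, combined with a log-convexity observation that reduces the required uniform bound to two endpoint estimates.

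Fix a set $U \subset \{1,\dots,n\}$ of size $m$ and a vertex $v \in U$. By Poisson thinning, the number of $v$'s drawn slices $A^v_i$ contained in $U \setminus \{v\}$ is distributed as $\Pois(\lambda p(m))$, where $p(m) := \binom{m-1}{k-1}/\binom{n-1}{k-1}$ is the probability that a single uniformly random $(k-1)$-subset of $\{1,\dots,n\}\setminus\{v\}$ lies inside $U \setminus \{v\}$. Hence $P(v \text{ has a slice contained in } U) = 1 - e^{-\lambda p(m)}$, and by independence of $(Y_v, A^v_1, A^v_2, \ldots)$ across $v$, $P(U \text{ is a quorum}) = (1 - e^{-\lambda p(m)})^m$. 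Applying $1 - e^{-x} \leq x$, the elementary bound $p(m) \leq (m/n)^{k-1}$, and $\binom{n}{m} \leq (en/m)^m$, a union bound over the size-$m$ subsets gives
\begin{equation*}
P\bigl(\text{some } m\text{-set is a quorum}\bigr) \leq \bigl(e\, n^c (m/n)^{k-2}\bigr)^m =: g(m).
\end{equation*}

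Since $\log g(m) = m\bigl[1 + c\log n + (k-2)\log(m/n)\bigr]$ has second derivative $(k-2)/m > 0$, $g$ is log-convex on $[k, n/2]$, so its maximum there is attained at one of the endpoints. At $m = n/2$, $g(n/2) = (4e\,n^c 2^{-k})^{n/2}$, whose base is $o(1)$ because $k = \omega(\log n)$ dominates $c\log_2 n$; at $m = k$, $g(k) = (e\,n^c(k/n)^{k-2})^k$, and the factor $(k/n)^{k-2}$ decays faster than any polynomial in $n$ when $k = \omega(\log n)$. Both endpoint values are therefore super-polynomially small, so $\sum_{m=k}^{\lfloor n/2\rfloor} g(m) \leq n \max\bigl(g(k),g(n/2)\bigr) = o(1)$. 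With high probability no set of size at most $n/2$ is a quorum, forcing every quorum to have size $> n/2$, and every pair of quora must then intersect, yielding QIP. The main obstacle is obtaining a bound on $g(m)$ that is uniformly small across the full range $m \in [k, n/2]$; the log-convexity observation is precisely what lets one avoid a term-by-term calculation and reduce the task to the two explicit endpoint estimates, each of which is straightforward once $k$ grows faster than $\log n$.
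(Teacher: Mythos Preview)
Your argument is correct and follows the same overall strategy as the paper: compute $P(U\text{ is a quorum})=(1-e^{-\lambda p(|U|)})^{|U|}$, then use a first-moment/union bound to show that with high probability no set of size at most $n/2$ is a quorum, whence QIP follows by pigeonhole. The difference lies in how the sum over sizes $m$ is controlled. The paper splits into a sublinear regime $m=o(n)$ and a linear regime $|U|=dn$ with $d<1/2$, treating each separately (and, in the linear case, bounding only the probability for a single $U$ without explicitly carrying the $\binom{n}{m}$ factor through). You instead bound everything by $g(m)=(e\,n^{c}(m/n)^{k-2})^{m}$ and observe that $\log g$ is convex on $[k,n/2]$, reducing the uniform control to the two endpoint checks $m=k$ and $m=n/2$. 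This buys you a cleaner and more self-contained argument: the union bound is handled uniformly in one stroke, and both endpoint bases are seen to be $o(1)$ directly from $k=\omega(\log n)$ (using $k\le n/2$ so that $\log(n/k)\ge\log 2$ at the lower endpoint). The paper's split is more heuristic but conveys the same intuition that the bound is worst either for very small or barely sub-half sets.
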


Since $O(n^k) = O(n^{\log{n}})$ is super-polynomial, our lower bound states that the phase transition happens after $O(n^k)$ for $k > \log{n}$.

To prove the theorem, we need the following lemma.
\begin{lemma}
\label{lem_lb}
Let $U \subset V$ be a set of vertices with $|U| \geq k$. Under the generative model, the probability that $U$ is a quorum is
\begin{equation}\label{eq:quorum_prob}
    P(U \text{ is a quorum}) = \left( 1 - \exp\left[ -\lambda \frac{(|U|-1)_{k-1}}{(n-1)_{k-1}} \right] \right)^{|U|},
\end{equation}
where $(a)_b$ is the falling Pochhammer symbol.
\end{lemma}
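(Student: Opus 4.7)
The plan is to reduce the probability that $U$ is a quorum to a product of independent per-vertex events, then use Poisson thinning to compute each factor in closed form. By Definition \ref{quorum}, $U$ is a quorum if and only if every vertex $v \in U$ has at least one of its drawn slices $X^v_i$ contained entirely in $U$. Because the draws $Y_v, A^v_1, \dots, A^v_{Y_v}$ are performed independently across vertices in the generative model, the event ``vertex $v$ has at least one slice inside $U$'' decouples into $|U|$ mutually independent events, one per $v \in U$. So the goal reduces to computing, for a fixed $v \in U$, the probability that $v$ contributes at least one slice contained in $U$.

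For that single-vertex computation, I would first compute the probability that one particular slice $X^v_i = \{v\} \cup A^v_i$ lies in $U$. Since $v \in U$ automatically, this is the probability that a uniformly random $(k-1)$-subset $A^v_i$ of $\{1,\dots,n\}\setminus \{v\}$ lies in $U \setminus \{v\}$, which equals
\begin{equation*}
    p \;=\; \frac{\binom{|U|-1}{k-1}}{\binom{n-1}{k-1}} \;=\; \frac{(|U|-1)_{k-1}}{(n-1)_{k-1}},
\end{equation*}
using that the Pochhammer ratio is exactly the ratio of binomial coefficients. Now apply Poisson thinning: because $Y_v \sim \Pois(\lambda)$ and each of the $Y_v$ slices is independently ``good'' (contained in $U$) with probability $p$, the number of good slices for vertex $v$ is $\Pois(\lambda p)$. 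Therefore the probability that vertex $v$ has at least one good slice is $1 - e^{-\lambda p}$.

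Combining the two steps via independence across $v \in U$ gives the claimed formula
\begin{equation*}
    P(U \text{ is a quorum}) \;=\; \prod_{v \in U} \bigl(1 - e^{-\lambda p}\bigr) \;=\; \bigl(1 - e^{-\lambda p}\bigr)^{|U|},
\end{equation*}
with $p$ as above. There is no real obstacle here; the only step that needs care is justifying Poisson thinning, namely that conditional on $Y_v = m$ the indicators $\mathbf{1}\{A^v_i \subset U \setminus \{v\}\}$ are i.i.d.\ Bernoulli$(p)$, so the count is Binomial$(m, p)$, and marginalizing over $Y_v \sim \Pois(\lambda)$ yields a $\Pois(\lambda p)$ count. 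The hypothesis $|U| \geq k$ is used only to ensure $p > 0$, i.e.\ that a $(k-1)$-subset of $U \setminus \{v\}$ exists at all.
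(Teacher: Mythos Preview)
Your proof is correct and follows essentially the same route as the paper: decompose by independence across $v\in U$, compute the single-slice success probability $p=(|U|-1)_{k-1}/(n-1)_{k-1}$, and then evaluate the per-vertex factor. The only cosmetic difference is that the paper obtains $P(\text{no good slice})=e^{-\lambda p}$ via the Poisson probability generating function $E[(1-p)^{Y_v}]=e^{-\lambda p}$, whereas you phrase the same computation as Poisson thinning.
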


\begin{proof}
We have
\begin{equation*}
\label{lem_1}
    \begin{split}
        P(\text{$U$ is a quorum}) &= \prod_{u \in U} P(\exists S \in Q_{ava}(u), S \subset U) \\
        &= \prod_{u \in U} \big(1 - P(\forall S \in Q_{ava}(u), S \not\subset U)\big) \\
        &= \big(1 - P(\forall S \in Q_{ava}(u), S \not\subset U)\big)^{|U|},
    \end{split}
\end{equation*}
for any fixed $u \in U$. Moreover,
\begin{equation*}
\label{lem_2}
    \begin{split}
        P(\forall S \in Q_{ava}(u), S \not\subset U)\big) &= \prod_{S \in Q_{ava}(u)} P(S \not\subset U) \\
        &= \prod_{S \in Q_{ava}(u)} (1 - P(S \subset U)) \\
        &= \exp\left[ -\lambda \frac{(|U|-1)_{k-1}}{(n-1)_{k-1}} \right],
    \end{split}
\end{equation*}
where in the last step we use that $E[a^Y] = e^{(a-1)\lambda}$ for $Y \sim \Pois(\lambda)$. Putting the above two equations together yields the desired claim.
\end{proof}

Using the lemma, we can now prove our lower bound for the phase transition.

\begin{proof}[Proof of Theorem \ref{thm_lb}]
Note that $\frac{(|U|-1)_{k-1}}{(n-1)_{k-1}} = O(|U|^k / n^k)$. From Lemma \ref{lem_lb}, the expected number of quorums of size $m$ is bounded above by
\begin{equation*}
    \binom{n}{m} (1-e^{-\lambda m^k / n^k})^m \leq n^m \left(2\lambda \frac{m^k}{n^k}\right)^m,
\end{equation*}
where we use that $1-e^{-x} \leq 2x$ for $x > 0$. For $m = o(n)$, the expected number of quorums of size $n$ can be made arbitrarily small. Thus, with high probability, we can assume all quora have linear size.

Next, let $U$ be a quorum of linear size, i.e. $|U| = dn$ for some $d$. Assume $d < \frac{1}{2}$. Write $k = p \log{n}$ for constant $p$, and let $\lambda = n^p$. Then
\begin{equation*}
\begin{split}
    P(\text{$U$ is quorum}) &\leq (1-\exp{(-\lambda d^k)})^{dn} \\
    &= (1-\exp{(-n^p \cdot d^{p \log{n}})})^{dn} \\
    &\leq (1-\exp{(-n^p \cdot \left(1/2 \right)^{p \log{n}})})^{dn} \\
    &= (1-e^{-O(1)})^{dn}.
\end{split}
\end{equation*}

For large $n$, the above probability can be made exponentially small. Thus, with high probability, there are no quorums of size $\leq \frac{1}{2} n$. Since any two sets of size greater than $\frac{1}{2} n$ must have common intersection, it follows that $Q_{ava}$ exhibits QIP with high probability.
\end{proof}

%\section{Relating the two algorithms via sample complexity}

\section{Conclusion}
In this paper, we develop a generative model for the Avalanche consensus protocol, and show that Avalanche can generate FBAS hypergraphs that satisfy the QIP, and therefore achieve safety and liveness guarantees similar to SCP. Because verifying that an FBAS hypergraph satisfies QIP is an NP-complete problem, we argue that one needs simulations and statistical arguments like ours to analyze the SCP in practical settings. More generally, as research into probabilistic consensus mechanisms has increased dramatically since the seminal work of Ben-Or \cite{ben1983another}, there has been an increasing need for a theoretical framework to compare different consensus mechanisms. At the moment, there are a wide variety of proof techniques and guarantees that, at first glance, appear non-comparable. Moreover, the differences in the definitions of what it means to achieve consensus make it hard to discern if a BFT algorithm can be fairly compared to a probabilistic algorithm. However, there are a number of similarities hidden within the analyses of distributed consensus protocols, and this work aims to evince one of these similarities.

In particular, the phase transition that was described was inspired by transitions that are found in statistical physics and integrable probability. For instance, the quorum probabilities of Equation \ref{eq:quorum_prob} are equivalent to cluster size probabilities within the random cluster model under certain limits \cite{grimmett2004random}. This model, which has a variety of sharp phase transitions and a complicated phase diagram, serves as a formal probabilistic tool for analyzing random cluster formations in graphs. Perhaps, by borrowing tools from these fields, it will be possible to show that a larger number of consensus protocols and committee selection algorithms are related by simple statistical transformations. Our hope is that such an inquiry would lead to a simple taxonomy of distributed consensus algorithms and that this paper, which shows that one consensus algorithm can be viewed as a randomized realization of another consensus algorithm, will provide a way to prune this taxonomy.

\bibliographystyle{IEEEtran}
\bibliography{IEEEabrv,stellar_avalanche}

% Generated by IEEEtran.bst, version: 1.14 (2015/08/26)
\begin{thebibliography}{10}
\providecommand{\url}[1]{#1}
\csname url@samestyle\endcsname
\providecommand{\newblock}{\relax}
\providecommand{\bibinfo}[2]{#2}
\providecommand{\BIBentrySTDinterwordspacing}{\spaceskip=0pt\relax}
\providecommand{\BIBentryALTinterwordstretchfactor}{4}
\providecommand{\BIBentryALTinterwordspacing}{\spaceskip=\fontdimen2\font plus
\BIBentryALTinterwordstretchfactor\fontdimen3\font minus
  \fontdimen4\font\relax}
\providecommand{\BIBforeignlanguage}[2]{{%
\expandafter\ifx\csname l@#1\endcsname\relax
\typeout{** WARNING: IEEEtran.bst: No hyphenation pattern has been}%
\typeout{** loaded for the language `#1'. Using the pattern for}%
\typeout{** the default language instead.}%
\else
\language=\csname l@#1\endcsname
\fi
#2}}
\providecommand{\BIBdecl}{\relax}
\BIBdecl

\bibitem{castro1999practical}
M.~Castro, B.~Liskov \emph{et~al.}, ``Practical byzantine fault tolerance,'' in
  \emph{OSDI}, vol.~99, 1999, pp. 173--186.

\bibitem{lamport1998part}
L.~Lamport, ``The part-time parliament,'' \emph{ACM Transactions on Computer
  Systems (TOCS)}, vol.~16, no.~2, pp. 133--169, 1998.

\bibitem{garay2015bitcoin}
J.~Garay, A.~Kiayias, and N.~Leonardos, ``The bitcoin backbone protocol:
  Analysis and applications,'' in \emph{Annual International Conference on the
  Theory and Applications of Cryptographic Techniques}.\hskip 1em plus 0.5em
  minus 0.4em\relax Springer, 2015, pp. 281--310.

\bibitem{pass2017analysis}
R.~Pass, L.~Seeman, and A.~Shelat, ``Analysis of the blockchain protocol in
  asynchronous networks,'' in \emph{Annual International Conference on the
  Theory and Applications of Cryptographic Techniques}.\hskip 1em plus 0.5em
  minus 0.4em\relax Springer, 2017, pp. 643--673.

\bibitem{fitzi2018}
M.~Fitzi, P.~Ga{\v{z}}i, A.~Kiayias, and A.~Russell, ``Parallel chains:
  Improving throughput and latency of blockchain protocols via parallel
  composition,'' Cryptology ePrint Archive, Report 2018/1119, 2018,
  \url{https://eprint.iacr.org/2018/1119}.

\bibitem{wood2014ethereum}
G.~Wood, ``Ethereum: A secure decentralised generalised transaction ledger,''
  \emph{Ethereum project yellow paper}, vol. 151, pp. 1--32, 2014.

\bibitem{hanke2018dfinity}
T.~Hanke, M.~Movahedi, and D.~Williams, ``Dfinity technology overview series,
  consensus system,'' \emph{arXiv preprint arXiv:1805.04548}, 2018.

\bibitem{sompolinsky2015secure}
Y.~Sompolinsky and A.~Zohar, ``Secure high-rate transaction processing in
  bitcoin,'' in \emph{International Conference on Financial Cryptography and
  Data Security}.\hskip 1em plus 0.5em minus 0.4em\relax Springer, 2015, pp.
  507--527.

\bibitem{schwartz2014ripple}
D.~Schwartz, N.~Youngs, A.~Britto \emph{et~al.}, ``The ripple protocol
  consensus algorithm,'' \emph{Ripple Labs Inc White Paper}, vol.~5, 2014.

\bibitem{pass2017fruitchains}
R.~Pass and E.~Shi, ``Fruitchains: A fair blockchain,'' in \emph{Proceedings of
  the ACM Symposium on Principles of Distributed Computing}.\hskip 1em plus
  0.5em minus 0.4em\relax ACM, 2017, pp. 315--324.

\bibitem{micali2016algorand}
S.~Micali, ``Algorand: The efficient and democratic ledger,'' \emph{arXiv
  preprint arXiv:1607.01341}, 2016.

\bibitem{gilad2017algorand}
Y.~Gilad, R.~Hemo, S.~Micali, G.~Vlachos, and N.~Zeldovich, ``Algorand: Scaling
  byzantine agreements for cryptocurrencies,'' in \emph{Proceedings of the 26th
  Symposium on Operating System Principles}.\hskip 1em plus 0.5em minus
  0.4em\relax ACM, 2017, pp. 51--68.

\bibitem{kokoris2017omniledger}
E.~Kokoris-Kogias, P.~Jovanovic, L.~Gasser, N.~Gailly, and B.~Ford,
  ``Omniledger: A secure, scale-out, decentralized ledger.'' \emph{IACR
  Cryptology ePrint Archive}, vol. 2017, p. 406, 2017.

\bibitem{zamanirapidchain}
M.~Zamani, M.~Movahedi, and M.~Raykova, ``Rapidchain: A fast blockchain
  protocol via full sharding,'' \emph{preprint}, 2018.

\bibitem{avalanche2018}
T.~Rocket, ``Snowflake to avalanche: A novel metastable consensus protocol
  family for cryptocurrencies,'' May 2018.

\bibitem{mazieres2015stellar}
D.~Mazieres, ``The stellar consensus protocol: A federated model for
  internet-level consensus,'' \emph{Stellar Development Foundation}, 2015.

\bibitem{lamport_byzantine}
L.~Lamport, R.~Shostak, and M.~Pease, ``The byzantine generals problem,''
  \emph{ACM Trans. Program. Lang. Syst.}, vol.~4, no.~3, pp. 382--401, Jul.
  1982.

\bibitem{directed_hg}
G.~Gallo, G.~Longo, S.~Pallottino, and S.~Nguyen, ``Directed hypergraphs and
  applications,'' \emph{Discrete Applied Mathematics}, vol.~42, no.~2, pp. 177
  -- 201, 1993.

\bibitem{cormode2005improved}
G.~Cormode and S.~Muthukrishnan, ``An improved data stream summary: the
  count-min sketch and its applications,'' \emph{Journal of Algorithms},
  vol.~55, no.~1, pp. 58--75, 2005.

\bibitem{berinde2010space}
R.~Berinde, P.~Indyk, G.~Cormode, and M.~J. Strauss, ``Space-optimal heavy
  hitters with strong error bounds,'' \emph{ACM Transactions on Database
  Systems (TODS)}, vol.~35, no.~4, p.~26, 2010.

\bibitem{valiant_poisson}
G.~Valiant and P.~Valiant, ``Estimating the unseen: An n/log(n)-sample
  estimator for entropy and support size, shown optimal via new clts,'' in
  \emph{Proceedings of the Forty-third Annual ACM Symposium on Theory of
  Computing}, ser. STOC '11.\hskip 1em plus 0.5em minus 0.4em\relax New York,
  NY, USA: ACM, 2011, pp. 685--694.

\bibitem{jacquet_poisson}
P.~Jacquet and W.~Szpankowski, ``Analytical depoissonization and its
  applications,'' \emph{Theoretical Computer Science}, vol. 201, no.~1, pp. 1
  -- 62, 1998.

\bibitem{scp_np}
L.~Lachowski, ``Complexity of the quorum intersection property of the federated
  byzantine agreement system,'' 2019.

\bibitem{boucheron2013concentration}
S.~Boucheron, G.~Lugosi, and P.~Massart, \emph{Concentration inequalities: A
  nonasymptotic theory of independence}.\hskip 1em plus 0.5em minus 0.4em\relax
  Oxford university press, 2013.

\bibitem{ben1983another}
M.~Ben-Or, ``Another advantage of free choice (extended abstract): Completely
  asynchronous agreement protocols,'' in \emph{Proceedings of the second annual
  ACM symposium on Principles of distributed computing}.\hskip 1em plus 0.5em
  minus 0.4em\relax ACM, 1983, pp. 27--30.

\bibitem{grimmett2004random}
G.~Grimmett, ``The random-cluster model,'' in \emph{Probability on discrete
  structures}.\hskip 1em plus 0.5em minus 0.4em\relax Springer, 2004, pp.
  73--123.

\end{thebibliography}

\end{document}